\def\titlerunning{Gottesman Types for Quantum Programs}
\newtheorem{proposition}[]{Proposition}
\newtheorem{corollary}{Corollary}
\definecolor{ltblue}{rgb}{0,0.4,0.4}
\definecolor{dkblue}{rgb}{0,0.1,0.6}
\definecolor{dkgreen}{rgb}{0,0.35,0}
\definecolor{dkviolet}{rgb}{0.3,0,0.5}
\definecolor{dkred}{rgb}{0.5,0,0}
\newcommand{\comp}{\mathrel{\$}}
\newcommand{\ctrl}[2]{\draw[fill=black] (#1,#2) circle [radius=0.12];}
\newcommand{\unitary}[3]{\draw[fill=white] (#2-0.4,#3-0.4) rectangle node {\texttt{#1}} (#2+0.4,#3+0.4);}
\newcommand{\qwire}{\ensuremath{\mathcal{Q}\textsc{wire}}\xspace}
\newcommand{\type}[1]{\ensuremath{\mathbf{#1}}\xspace}
\newcommand{\Z}{\type{Z}}
\newcommand{\X}{\type{X}}
\newcommand{\Y}{\type{Y}}
\newcommand{\I}{\type{I}}
\newcommand{\E}{\type{E}}
\newcommand{\floor}[1]
\newcommand{\anote}[1]{\textcolor[rgb]{0.7,0,0.3}{(Aarthi: #1)}}
\newcommand{\bnote}[1]{\textcolor{orange}{(Brad: #1)}}
\newcommand{\anew}[1]{\textcolor[rgb]{0.7,0,0.3}{#1}}
\newcommand{\rnr}[1]{\textcolor[rgb]{0.1,0.7,0.1}{(Robert: #1)}}
\newcommand{\knote}[1]{\textcolor{blue}{(Kartik: #1)}}
\newcommand{\todo}[1]{\textcolor{red}{#1}}
  \newcommand{\anote}[1]{}
  \newcommand{\bnote}[1]{}
  \newcommand{\anew}[1]{}
  \newcommand{\rnr}[1]{}  
  \newcommand{\knote}[1]{}
  \newcommand{\todo}[1]{}
\title{Gottesman Types for Quantum Programs}
\author{Robert Rand
\institute{University of Chicago}
\email{rand@uchicago.edu}
\and
Aarthi Sundaram
\institute{Microsoft Quantum}
\email{aarthims@gmail.com}
\and
Kartik Singhal
\institute{University of Chicago}
\email{ks@cs.uchicago.edu}
\and
Brad Lackey
\institute{Microsoft Quantum}
\institute{University of Maryland}
\email{bclackey@umd.edu}
}
\begin{document}
\maketitle

\begin{abstract}
The Heisenberg representation of quantum operators provides a powerful technique for reasoning about quantum circuits, albeit those restricted to the common (non-universal) Clifford set $H$, $S$ and $CNOT$. The Gottesman-Knill theorem showed that we can use this representation to efficiently simulate Clifford circuits. We show that Gottesman's semantics for quantum programs can be treated as a type system, allowing us to efficiently characterize a common subset of quantum programs. We also show that it can be extended beyond the Clifford set to partially characterize a broad range of programs. We apply these types to reason about separable states and the superdense coding algorithm.
\end{abstract}

\section{Introduction}

The \emph{Heisenberg representation} of quantum mechanics treats quantum operators as functions on unitary matrices, rather than on the quantum state. For instance, for any quantum state $\ket{\phi}$,
\begin{equation} HZ\ket{\phi} = XH\ket{\phi} \end{equation}
so an $H$ gate can be viewed as a higher-order function that takes $Z$ to $X$ (and similarly takes $X$ to $Z$).
Gottesman~\cite{Gottesman1998} uses this representation to present the rules for how the Clifford quantum gates $H$, $S$ and $CNOT$ operate on Pauli $X$ and $Z$ gates, which is sufficient to fully describe the behavior of the Clifford operators. There, H is given the following description based on its action above:
\[ H : \Z \to \X \qquad H : \X \to \Z \]

In Gottesman's paper, the end goal is to fully describe quantum programs and prove the \emph{Gottesman-Knill theorem}, which shows that any Clifford circuit can be simulated efficiently. Here we observe that the judgements above look like typing judgments, and show that they can indeed be treated as such (\cref{sec:basis-semantics}). As such, they can be used to make coarse guarantees about programs, without fully describing the programs' behaviors. We show a simple example of applying this system to the superdense coding algorithm (\cref{sec:superdense}). In \cref{section:GHZ}, we demonstrate, using the GHZ state $\ket{000} + \ket{111}$, how the type system is capable of tracking both the creation and destruction of entanglement. In \cref{sec:beyond-clifford}, we extend the type system to deal with programs outside the Clifford group and use it to characterize the Toffoli gate. We discuss the possible future applications of this system in \cref{sec:future}.

The system and examples in this paper are formalized in Coq at \url{https://github.com/inQWIRE/GottesmanTypes}.

\pagebreak

\section{Gottesman Semantics}
\label{sec:main}

Gottesman's semantics for $H$, $S$, and $CNOT$ are given by the following table.

\begin{figure}[h]
\[ \begin{array}{rlcrl}
H: & \X \rightarrow \Z \qquad & & CNOT: & \X \otimes \I \rightarrow \X \otimes \X\\
H: & \Z \rightarrow \X & & CNOT: & \I \otimes \X \rightarrow \I \otimes \X \\
S: & \X \rightarrow \Y & & CNOT: & \Z \otimes \I \rightarrow \Z \otimes \I \\
S: & \Z \rightarrow \Z & & CNOT: & \I \otimes \Z \rightarrow \Z \otimes \Z
\end{array} \]
\label{Basic }
\end{figure}

Note that these rules are intended to simply describe the action of each gate on the corresponding unitary matrices, as in equation 1. However, given the action of a circuit on every permutation of $X$ and $Z$ (or any spanning set) we can deduce the semantics of the program itself~\cite{Gottesman1998}.\footnote{The intuition behind this comes from the fact that this set of $X$ and $Z$ operators form the generators of the Pauli basis for quantum states. Hence, deducing the action of a unitary matrix on them provides an information theoretically complete picture of the action of the matrix on any input and suffices to deduce the semantics of the program.} For our purposes, though, we will treat these as a type system, justifying that choice in \cref{sec:basis-semantics}.

We can combine our typing rules by simple multiplication, for instance combining the second and third rules for $CNOT$, we get
\[ CNOT : (\I\Z \otimes \X\I \rightarrow \I\Z \otimes \X\I) = \Z \otimes \X \rightarrow \Z \otimes \X \]

In the rule for $S$, $\Y$ is equivalent to $i\X\Z$ so we can reason about an $S$ applied twice compositionally. We use $f \comp g$ for forward function composition (equivalent to $g \circ f$):
\[ S ; S : (\X \to i\X\Z \comp i\X\Z \to i\Y\Z) = \X \to -\X \]

Once we have a type system for $H$, $S$ and $CNOT$ we can define additional gates in terms of these, and derive their types. For instance, the Pauli $Z$ gate is simply $S;S$, for which we derived the following type for $\X$ and trivially can derive the type for $\Z$:
\begin{align*}
Z: ~& \X \rightarrow -\X \\
Z: ~& \Z \rightarrow \Z
\end{align*}

Defining $X$ as $H;Z;H$, we can derive the type for the Pauli $X$ gate as:
\begin{align*}
    X = H ; Z ; H: ~& (\X \to \Z \comp \Z \to \Z \comp \Z \to \X)  = \X \rightarrow \X \\
    X = H ; Z ; H: ~& (\Z \rightarrow \X \comp \X \rightarrow -\X \comp -\X \rightarrow -\Z)  = \Z \rightarrow -\Z
\end{align*}

Likewise, $Y = S; X; Z; S$ and so, the type for the Pauli $Y$ gate would be:
\begin{align*}
    Y = S; Z ; X ; S: ~& (\Z \rightarrow \Z \comp \Z \rightarrow \Z \comp \Z \rightarrow -\Z \comp -\Z \rightarrow - \Z) = \Z \rightarrow -\Z \\
    Y = S; Z ; X ; S: ~& (\X \rightarrow \Y \comp \Y \rightarrow -\Y \comp -\Y \rightarrow \Y \comp \Y \rightarrow -\X) = \X \rightarrow -\X
\end{align*}
We can also define more complicated gates like $CZ$ and $SWAP$ as $H_2; CNOT; H_2$ (where $H_2$ is $H$ applid to $CNOT$'s target qubit)
and $CNOT; NOTC; CNOT$, (where $NOTC$ is a flipped $CNOT$) for which we can easily derive the following types:
\[ \begin{array}{rlrl}
CZ: & \X \otimes \I \rightarrow \X \otimes \Z & \quad
SWAP: & \X \otimes \I \rightarrow \I \otimes \X \\
CZ: & \I \otimes \X \rightarrow \Z \otimes \X & \quad
SWAP: & \I \otimes \X \rightarrow \X \otimes \I \\
CZ: & \Z \otimes \I \rightarrow \Z \otimes \I & \quad
SWAP: & \Z \otimes \I \rightarrow \I \otimes \Z \\
CZ: & \I \otimes \Z \rightarrow \I \otimes \Z & \quad
SWAP: & \I \otimes \Z \rightarrow \Z \otimes \I \\
\end{array} \]

By combining the rules stated above, we can also derive the action of $SWAP$ on $\X \otimes \Y$ as:
\[SWAP: (\X \otimes \Y = \X\I\I \otimes \I(i\X)\Z) \rightarrow (\I(i\X)\Z \otimes \X\I\I = \Y \otimes \X) \]
This gives us less information than the types for $SWAP$ above, but might be the type we intend for a given use of swapping: In particular, we might want to use a $SWAP$ to exchange qubits in the $X$ and $Y$ bases, as we will show.

We show the full rules for typing circuits in \cref{fig:types}, which we will reference throughout this paper.

\begin{figure}
    \centering

\begin{enumerate}
    \item Grammar:
    \begin{align*}
        G := \I \mid \X \mid \Z \mid - G \mid i G \mid G * G \mid G \otimes G \mid G \to G \mid G \cap G \mid G \times G \mid \top
    \end{align*}
    \item Multiplication and Tensor Laws:
    \begin{center}
    \renewcommand{\arraystretch}{1.2}
    \begin{tabular}{>{\hspace{\labelsep}}l @{\qquad} >{\hspace{\labelsep}}l @{\qquad} >{\hspace{\labelsep}}l @{\qquad} >{\hspace{\labelsep}}l}
        $\X * \X = \I$ & $\Z * \Z = \I$ & $\Z * \X = - \X * \Z$ & $A * \I = A = \I * A$ \\
         $ - - A = A$ & $i (i A) = - A$ & $i (-A) = - (iA)$ & $A * (B * C)  = A * B * C$ \\
        \multicolumn{2}{c}{\hspace{\labelsep}$-A * B = -(A * B) = A * -B$} & \multicolumn{2}{c}{\hspace{\labelsep}$i A * B = i (A * B) = A * i B$}\\
        \multicolumn{2}{c}{\hspace{\labelsep}$A \otimes (B \otimes C) = (A \otimes B) \otimes C$} & \multicolumn{2}{c}{\hspace{\labelsep}$A \otimes B = (A \otimes \I) * (\I \otimes B)$}\\
        \multicolumn{2}{c}{\hspace{\labelsep}$iA \otimes B = i(A \otimes B) = A \otimes iB$} & \multicolumn{2}{c}{\hspace{\labelsep}$ - A \otimes B = - (A \otimes B) = A \otimes - B$}
    \end{tabular}
    \[
     \ (A \otimes B) * (C \otimes D) = (A*C) \otimes (B*D) \text{ where } \abs{A} = \abs{C}
    \]
    \end{center}

    \item Tensors Rules:
    \begin{center}
    \renewcommand{\arraystretch}{2}
    \begin{tabular}{c @{\qquad \qquad} c}
    $\inferrule*[Right=$\otimes_1$]{g \ 1 : A \rightarrow A' \and \abs{E} = m-1}{g \ m : E \otimes A \otimes E' \rightarrow E \otimes A' \otimes E'}$ &
    $\inferrule*[Right=$\otimes$-rev]{g \ 2 \ 1 : A \otimes B \rightarrow A' \otimes B'}{g \ 1 \ 2 : B \otimes A \rightarrow B' \otimes A'}$ \\
    \multicolumn{2}{c}{$\inferrule*[Right=$\otimes_2$]{g \ 1 \ 2 : A \otimes B \rightarrow A' \otimes B' \and
    \abs{E} = m-1 \and \abs{E'} = n - m - 1}{g \ m \ n : E \otimes A \otimes E' \otimes B \otimes E'' \rightarrow E \otimes A' \otimes E' \otimes B' \otimes E''}$}
    \end{tabular}
    \end{center}
    \item Arrow and Sequence Rules:
    \begin{center}
    \renewcommand{\arraystretch}{2}
    \begin{tabular}{c @{\qquad \qquad} c}
    $\inferrule*[Right=mul]{p : A \rightarrow A' \and p : B \rightarrow B'}{p : (A*B) \rightarrow (A'*B')}$ &
    $\inferrule*[Right=im]{p : A \rightarrow A'}{p : i A \rightarrow i A'}$\\
    $\inferrule*[Right=cut]{p_1 : A \rightarrow B \and p_2 : B \rightarrow C}{p_1 ; p_2 : A \rightarrow C}$ &
    $\inferrule*[Right=neg]{p : A \rightarrow A'}{p : - A \rightarrow - A'}$\\
    \multicolumn{2}{c}{$p_1 ; (p_2 ; p_3) : A \equiv (p_1 ; p_2) ; p_3 : A$}
    \end{tabular}
    \end{center}
    \item Intersection Rules:
    \[
    I^{\otimes n} \cap A = A \qquad A \cap A = A \qquad A \cap B = B \cap A
    \qquad A \cap B \cap C = A \cap (B \cap C)
    \]
    \renewcommand{\arraystretch}{2}
    \begin{center}
    \begin{tabular}{c @{\qquad \qquad} c}
    $\inferrule*[Right=$\cap$-I]{g : A \and g : B}{g : A \cap B}$
    &$\inferrule*[Right=$\cap$-E]{ g : A \cap B}{g : A}$
\\
    $\inferrule*[Right=$\cap$-Arr]{g : (A \rightarrow B) \cap (A \rightarrow C)}{g : A \rightarrow (B \cap C)}$ &
    $\inferrule*[Right=$\cap$-Arr-Dist]{g : (A \rightarrow A') \cap (B \rightarrow B')}{g : (A \cap B) \rightarrow (A' \cap B')}$
    \end{tabular}
    \end{center}
    \item Separability Rules
    \[
    (A \otimes \I^n) = A \times \I^n \text{ where } A \in \{1,-1,i,-i\} * \{X, Y, Z\}
    \]
    \[
    (A \times B) \cap (A \otimes C) = A \times (B \cap C)
    \qquad
    (A \times B) \cap (I^{\otimes n} \otimes C) = A \times (B \cap C)
    \]
\end{enumerate}

    \caption{The grammar and typing rules for Gottesman types. The grammar allows us to describe ill-formed types, such as $\X \cap (\I \otimes \Z)$, but these don't type any circuits. The intersection and arrow typing rules are derived from standard subtyping rules~\cite[Chapter 15]{Pierce:TypeSystems}. The arity of a type not containing $\{\times,\to\}$ is the longest sequence of atoms connected by tensors. For instance, $\abs{\I*\Z \otimes i\X} = \abs{(\X \otimes \X) \cap (\Z \otimes \Z)} = 2$.}
    \label{fig:types}
\end{figure}

\section{Interpretation on Basis States}
\label{sec:basis-semantics}

We can interpret the type $H : \Z \to \X$ as saying that $H$ takes a qubit in the $\Z$ basis (that is, $\ket{0}$ or $\ket{1}$) to a qubit in the \X basis ($\ket{+}$ and $\ket{-}$). This form of reasoning is present in Perdrix's \cite{Perdrix2008} work on abstract interpretation for quantum systems, which classifies qubits in an \X or \Z basis, for the purpose of tracking entanglement. Unfortunately, that system cannot leave the \X and \Z bases, and hence cannot derive that $Z : \X \rightarrow -\X$ due to the intermediate \Y. More fundamentally, while we can show that $SWAP$ has type $\X \otimes \Z \rightarrow \Z \otimes \X$, the first $CNOT$ application entangles the two qubits (represented in our system by $\X\Z \otimes \X\Z$) which Perdrix classifies as simply $\top$ and marks as potentially entangled. (We also use $\top$ but only for gates that lie outside the set of Clifford operators, see \cref{sec:beyond-clifford}.)
As a result, Perdrix's system typically classifies most circuits as $\top$ after just a few gate applications, while ours never leaves the Pauli bases as long as we apply Clifford gates.

\begin{proposition}
\label{prop:eigensate}
    Given a unitary $U : A \rightarrow B$ in the Heisenberg interpretation, $U$ takes every eigenstate of $A$ to an eigenstate of $B$.
\end{proposition}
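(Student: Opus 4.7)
The plan is to unpack what the Heisenberg judgment $U : A \to B$ actually asserts at the level of matrices and then chase a one-line computation. The equation displayed in the introduction, $HZ\ket{\phi}=XH\ket{\phi}$ for every $\ket{\phi}$, tells us that the typing $U : A \to B$ is shorthand for the operator identity $UA = BU$ (equivalently $B = UAU^\dagger$, since $U$ is unitary). Under this reading, the proposition reduces to a standard fact about conjugated operators: eigenvectors of $A$ are mapped by $U$ to eigenvectors of $UAU^\dagger = B$ with the same eigenvalue.

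Concretely, I would suppose $\ket{\psi}$ is an eigenstate of $A$, so $A\ket{\psi} = \lambda\ket{\psi}$ for some scalar $\lambda$ (for Pauli types, $\lambda \in \{+1,-1,+i,-i\}$, but the argument is insensitive to this). Then I would compute
\[
B\,(U\ket{\psi}) \;=\; (BU)\ket{\psi} \;=\; (UA)\ket{\psi} \;=\; U(\lambda\ket{\psi}) \;=\; \lambda\,(U\ket{\psi}),
\]
using $BU = UA$ in the middle step. This shows $U\ket{\psi}$ is an eigenstate of $B$ with eigenvalue $\lambda$, which is exactly what the proposition claims.

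The main obstacle, to the extent there is one, is not the calculation but justifying that the one-line matrix identity $UA = BU$ really is the intended meaning of every judgment produced by the rules in Figure~\ref{fig:types}. Scalar prefixes ($-A$, $iA$), products ($A*B$), and tensor products ($A\otimes B$) all have to be read as the corresponding operations on operators, and the rules \textsc{mul}, \textsc{im}, \textsc{neg}, the tensor rules, and \textsc{cut} must each preserve the invariant $UA = BU$. Each of these is immediate from elementary properties of matrix multiplication, tensor product, and scalar multiplication, so one could either verify them case by case or simply observe that the Heisenberg semantics defines $U : A \to B$ by this equation and that all rules in the figure are sound for it; the proposition then follows uniformly from the calculation above. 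Intersection types $A \cap B$ pose no additional difficulty either, since $U : A \cap B$ unfolds to the conjunction $U : A$ and $U : B$, so the eigenstate statement applies independently to each summand.
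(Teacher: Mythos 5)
Your proof is correct and follows essentially the same route as the paper's: both identify the judgment $U : A \to B$ with the conjugation relation $B = UAU^\dagger$ (equivalently $BU = UA$) and then perform the same one-line eigenvalue computation showing $U\ket{\psi}$ is a $\lambda$-eigenstate of $B$. Your additional remarks about checking that the typing rules preserve this invariant go slightly beyond what the paper records, but the core argument is identical.
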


\begin{proof}
    From eq. [1] in Gottesman~\cite{Gottesman1998}, given a state $\ket{\psi}$ and an operator $U$,
    \[ U N \ket{\psi} = U N U^{\dagger} U \ket{\psi}. \]
    Additionally, in the Heisenberg interpretation this can be denoted as: $U : N \rightarrow U N U^{\dagger}$. Suppose that $\ket{\psi}$ is an eigenstate of $N$ with eigenvalue $\lambda$ and let $\ket{\phi}$ denote the state after $U$ acts on $\ket{\psi}$. Then,
    \[ \lambda \ket{\phi} = U (\lambda \ket{\psi}) = U N \ket{\psi} = U N U^{\dagger} U \ket{\psi} = (U N U^{\dagger}) \ket{\phi}.
    \]

    Hence, $\ket{\phi}$ is an eigenstate of the modified operator $U N U^{\dagger}$ with eigenvalue $\lambda$.
\end{proof}

\paragraph{Intersection Types}

It may seem odd that we have given multiple types to $H$, $S$, $CNOT$ and the various derived operators. This isn't particularly rare in type systems with subtyping, which is an appropriate lens for viewing the types we have given above. However, it is useful to have the most descriptive type for any term in our language. We can give these using intersection types. For instance, we have the following types for $H$ and $CNOT$:
\begin{align*}
H :  &(\X \to \Z) \cap (\Z \to \X) \\
CNOT :  &(\X \otimes \I \to \X \otimes \X) \cap (\I \otimes \X \to \I \otimes \X) \cap (\Z \otimes \I \to \Z \otimes \I) \cap (\I \otimes \Z \to \Z \otimes \Z)
\end{align*}

From these we can derive any of the types given earlier, using the standard rules for intersections (\cref{fig:types}). Another advantage of using intersection types is that it is closely related to the stabilizer formalism that is used extensively in error-correction. This connection is further discussed in~\cref{sec:future}.

\paragraph{The role of \I}

\I plays an interesting role in this type system. For \I alone, we have the following two facts, the first drawn from the Heisenberg representation of quantum mechanics and the second from our interpretation of types as eigenstates:
\begin{align}
    &\forall~U, \quad U : \I \to \I  \\
    &\forall~\ket{\psi},~\ket{\psi} : \I
\end{align}

This would lead us to treat \I as a kind of top type, where $A <: \I$ for any type $A$. However, this would be incompatible with $\otimes$. For example, the two qubit Bell pair $\ket{\Phi^+}$ has type $\X \otimes \X$ but not type $\X \otimes \I$. By contrast, $\X \otimes \I$ contains precisely the separable two qubit states where the first qubit is an eigenstate of $\X$. This second type, which is neither a subtype nor supertype of the first allows us to consider the important property of \emph{separability} or non-entanglement of qubits.

\section{Separability}
\label{sec:separability}

\begin{proposition}\label{prop:separability}
     For any Pauli matrix $U \in \{-1, 1, -i, i\} * \{ X,  Y,  Z\}$, the eigenstates of $U \otimes I^{\otimes n-1}$ are all the vectors $\ket{u} \otimes \ket{\psi}$ where $\ket{u}$ is an eigenstate of $U$ and $\ket{\psi}$ is any state.
\end{proposition}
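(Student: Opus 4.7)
The plan is to prove both set inclusions. The $\supseteq$ direction is immediate: if $U\ket{u} = \lambda\ket{u}$, then $(U \otimes I^{\otimes n-1})(\ket{u} \otimes \ket{\psi}) = \lambda(\ket{u} \otimes \ket{\psi})$, so any such product vector is an eigenstate of $U \otimes I^{\otimes n-1}$.

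The substantive direction is $\subseteq$. The key observation I would exploit is that every $U = cP$ with $c \in \{\pm 1, \pm i\}$ and $P \in \{X, Y, Z\}$ is normal and has exactly two distinct eigenvalues, namely $c$ and $-c$, each with a one-dimensional eigenspace. Fix an orthonormal eigenbasis $\{\ket{u_+}, \ket{u_-}\}$ for $U$ on $\mathbb{C}^2$ with $U\ket{u_\pm} = \pm c\ket{u_\pm}$, and decompose an arbitrary $\ket{\Phi} \in \mathbb{C}^2 \otimes \mathbb{C}^{2^{n-1}}$ uniquely as $\ket{\Phi} = \ket{u_+} \otimes \ket{\psi_+} + \ket{u_-} \otimes \ket{\psi_-}$ by projecting the first factor onto each eigenline. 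If $\ket{\Phi}$ is an eigenstate of $U \otimes I^{\otimes n-1}$ with eigenvalue $\mu$, applying the operator and comparing coefficients of $\ket{u_+} \otimes (\cdot)$ and $\ket{u_-} \otimes (\cdot)$ gives $(c - \mu)\ket{\psi_+} = 0$ and $(-c - \mu)\ket{\psi_-} = 0$. Since $c \neq -c$, at most one of these two factors vanishes, so exactly one of $\ket{\psi_+}, \ket{\psi_-}$ is nonzero, and therefore $\ket{\Phi} = \ket{u_\sigma} \otimes \ket{\psi_\sigma}$ for some $\sigma \in \{+,-\}$, exactly the claimed form.

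The main subtlety is the comparison-of-coefficients step: it depends on the direct-sum decomposition $\mathbb{C}^2 \otimes \mathbb{C}^{2^{n-1}} = (\spn\{\ket{u_+}\} \otimes \mathbb{C}^{2^{n-1}}) \oplus (\spn\{\ket{u_-}\} \otimes \mathbb{C}^{2^{n-1}})$, which follows from the linear independence of $\ket{u_+}$ and $\ket{u_-}$. A slicker alternative that avoids this bookkeeping is dimension counting: the $\pm c$-eigenspace of $U \otimes I^{\otimes n-1}$ has dimension $2^{n-1}$, and $\{\ket{u_\pm} \otimes \ket{\psi} : \ket{\psi} \in \mathbb{C}^{2^{n-1}}\}$ is a $2^{n-1}$-dimensional subspace contained in it, forcing equality. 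Either route uses only that $U$ has two simple eigenvalues of opposite sign, so the argument is uniform in the Pauli letter $P$ and in the scalar $c$.
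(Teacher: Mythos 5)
Your proposal is correct and follows essentially the same route as the paper's proof: the easy product-state direction, then expanding a putative eigenstate as $\ket{u_+}\otimes\ket{\psi_+}+\ket{u_-}\otimes\ket{\psi_-}$, applying the operator, and comparing orthogonal components to force one of the two pieces to vanish (using that the two eigenvalues $\pm c$ are distinct). The dimension-counting alternative you mention in passing is a genuinely slicker way to get the converse inclusion, but your main argument coincides with the paper's.
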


\begin{proof}
    Let $\ket{\phi}$ be the $\lambda$ eigenstate and $\ket{\phi^{\bot}}$ be the $-\lambda$ eigenstate for $U \in \{1, -1, i, -i\} * \{ X,  Y,  Z\}$ where $\lambda \in \{1, i\}$. Note that $\{\ket{\phi}, \ket{\phi^{\bot}}\}$ forms a single-qubit basis.

    First, consider states of the form $\ket{\gamma} = \ket{u} \otimes \ket{\psi}$ where $\ket{u} \in \{\ket{\phi}, \ket{\phi^{\bot}}\}$ and $\ket{\psi} \in \mathbb{C}^{2^{n-1}}$. Clearly,
    \[(U \otimes I^{\otimes n-1}) \ket{\gamma} = (U \otimes I^{\otimes n-1}) \ket{u} \otimes \ket{\psi} = (U \ket{u} ) \otimes \ket{\psi} = \lambda_u \ket{u} \otimes \ket{\psi}. \]
    Hence, every state of the form of $\ket{\gamma}$ is an eigenstate of $U \otimes I^{\otimes n-1}$. Additionally, note that by similar reasoning, for every separable state $\ket{\gamma} = \ket{v} \otimes \ket{\psi}$ where $\ket{v} \notin \{\ket{\phi}, \ket{\phi^{\bot}}\}$, $\ket{\gamma}$ is not an eigenstate of $U \otimes I^{\otimes n-1}$.

    Now we show, that any state not in this separable form cannot be an eigenstate of $U \otimes I^{\otimes n-1}$. By way of contradiction assume that $\ket{\delta}$ is an eigenstate of $U \otimes I^{\otimes n-1}$ with $(U \otimes I^{\otimes n-1}) \ket{\delta} = \mu \ket{\delta}$. Expand
    $$\ket{\delta} = \alpha \ket{\phi}\ket{\psi_1} + \beta\ket{\phi^{\bot}}\ket{\psi_2}$$
    where $\ket{\psi_1},\ket{\psi_2} \in \mathbb{C}^{2^{n-1}}$. Then we compute
    \begin{eqnarray*}
    (U \otimes I^{\otimes n-1}) \ket{\delta} & = & \alpha(U \ket{\phi}) \ket{\psi_1} +  \beta(U \ket{\phi^{\bot}}) \ket{\psi_2} \\
    & = & \lambda \alpha\ket{\phi}\ket{\psi_1} - \lambda \beta\ket{\phi^{\bot}}\ket{\psi_2}\\
    & = & \mu \alpha \ket{\phi} \ket{\psi_1} + \mu \beta \ket{\phi^{\bot}} \ket{\psi_2}
    \end{eqnarray*}
    where we have used that $\ket{\phi}$ and $\ket{\phi^{\bot}}$ are the $+\lambda$ and $-\lambda$ eigenvalues of $U$ respectively. As the components of the expansion are orthogonal to each other, $\mu$ must satisfy:
    \[
    \mu \alpha = \lambda \alpha \text{ and } \mu \beta = - \lambda \beta.
    \]
    As $\lambda \not= 0$, since $U \otimes I^{\otimes n-1}$ is unitary, we either have (i) $\alpha = 0$, $\mu = - \lambda$, and $\ket{\delta} = \ket{\phi^{\bot}}\ket{\psi_2}$ or (ii) $\beta = 0$, $\mu = +\lambda$, and $\ket{\delta} = \ket{\phi}\ket{\psi_1}$. In either case $\ket{\delta}$ has a separable form as claimed.
\end{proof}

Following Gottesman's notation, let $U_k := I^{\otimes (n-k)} \otimes U \otimes I^{\otimes (n-k)}$ for $1 \leq k \leq n$ and a single qubit Pauli $U$. (Gottesman writes $\overline{\X}_k$ and $\overline{\Z}_k$; we'll elide the bar but make frequent use of the notation.)
Combining~\Cref{prop:eigensate,prop:separability}, we obtain the following corollary:

\begin{corollary}
Every term of type $U_k$ is separable, for $U \in \{\pm X, \pm Y, \pm Z\}$ and $1 \leq k \leq n$. That is, the factor associated with $U$ is unentangled with the rest of the system.
\end{corollary}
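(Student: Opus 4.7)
The plan is to combine the two preceding propositions by reducing the general $k$ case to the $k=1$ case via a qubit permutation. First, I would invoke \Cref{prop:eigensate}: if a state $\ket{\psi}$ has type $U_k$ under the Heisenberg interpretation, then $\ket{\psi}$ is an eigenstate of the Pauli operator $U_k = I^{\otimes(k-1)} \otimes U \otimes I^{\otimes(n-k)}$. It therefore suffices to show that every eigenstate of $U_k$ is separable along the cut isolating the $k$-th qubit from the remaining $n-1$ qubits.

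Second, the case $k=1$ is dispatched directly by \Cref{prop:separability}: every eigenstate of $U \otimes I^{\otimes n-1}$ factors as $\ket{u}\otimes\ket{\varphi}$ where $\ket{u}$ is an eigenstate of $U$, which is exactly separability of qubit $1$ from the rest. The scalar factor $\{-1,1,-i,i\}$ allowed in the hypothesis of \Cref{prop:separability} ensures that $\pm X$, $\pm Y$, $\pm Z$ are all covered.

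For general $1 \leq k \leq n$, I would introduce the unitary $P$ obtained by composing $\mathit{SWAP}$ gates so that $P$ exchanges qubits $1$ and $k$ and fixes the rest. Conjugation then yields $P\, U_k \, P^\dagger = U \otimes I^{\otimes n-1}$, so $\ket{\psi}$ is a $\mu$-eigenstate of $U_k$ if and only if $P\ket{\psi}$ is a $\mu$-eigenstate of $U \otimes I^{\otimes n-1}$. Applying the $k=1$ case to $P\ket{\psi}$ gives $P\ket{\psi} = \ket{u}\otimes\ket{\varphi}$ with $\ket{u}$ an eigenstate of $U$. Applying $P^{-1}=P$ reinserts $\ket{u}$ into slot $k$, exhibiting $\ket{\psi}$ as a tensor product of a one-qubit state in position $k$ with an $(n{-}1)$-qubit state on the other positions, which is exactly the separability claim.

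There is no real obstacle: both substantive facts are already established, and the corollary is essentially a bookkeeping combination. The only point that requires care is verifying that $P U_k P^\dagger = U \otimes I^{\otimes n-1}$, which is immediate from the identity $(A_1 \otimes \cdots \otimes A_n)(B_1 \otimes \cdots \otimes B_n) = (A_1 B_1) \otimes \cdots \otimes (A_n B_n)$ together with $\mathit{SWAP}\,(I \otimes I)\,\mathit{SWAP} = I \otimes I$ and the observation that separability of a designated qubit from the remainder is manifestly invariant under permutations of qubit labels.
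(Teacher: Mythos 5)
Your proposal is correct and matches the paper's (implicit) argument: the paper simply states the corollary as the combination of \Cref{prop:eigensate} (a term of type $U_k$ is an eigenstate of the operator $U_k$) with \Cref{prop:separability} (eigenstates of $U\otimes I^{\otimes n-1}$ are separable), leaving the reduction from general $k$ to $k=1$ as routine bookkeeping. Your explicit SWAP-conjugation step fills in that bookkeeping correctly, so there is nothing to object to.
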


In this light, we can reconsider the types we've given to $CNOT$. In \cref{sec:main}, we showed that $CNOT$ has the type $\Z \otimes \X \to \Z \otimes \X$. This is true but not particularly useful to a programmer who doesn't view the eigenvectors of $\Z \otimes \X$ as a helpful category. If however, we look at the intersection type given for $CNOT$ in \cref{sec:basis-semantics}, we see that $CNOT$ has the type $(\Z \otimes \I \to \Z \otimes \I) \cap (\I \otimes \X \to \I \otimes \X)$. This says that $CNOT$ takes $\ket{i} \otimes \ket{\psi}$ where $i\in \{0,1\}$ to a similar separable state, and likewise for when the second qubit is in \X. Hence, we can conclude that it takes a separable pair of \Z and \X qubits to a similar separable pair.

Our $\cap$-\textsc{Arr-Dist} rule (\cref{fig:types}), which follows directly from the subtyping rules for arrow and intersection\footnote{Thanks to Andreas Rossberg for \href{https://stackoverflow.com/a/61266762/1167061}{pointing this out on Stack Overflow}.}, allows us to make this explicit. We can weaken $CNOT$'s type to $(\Z \otimes \I \cap \I \otimes \X) \to (\Z \otimes \I \cap \I \otimes \X)$.
This type, which encodes the separability of the \Z and \X qubits, is useful enough that we will introduce a new type constructor $\times$, where $A \times \I^{\otimes n} = A \otimes \I^{\otimes n}$ for $A \in \{\pm X, \pm Y, \pm Z\}$. This distributes in the expected way over intersections: $(A \times B) \cap (A \otimes C) = A \times (B \cap C)$ and $(A \times B) \cap (I \otimes C) = A \times (B \cap C)$, allowing us to derive $CNOT : \Z \times \X$.

\section{Example: Superdense Coding}
\label{sec:superdense}

To illustrate the power of this simple system, consider the example of superdense coding as in \cref{fig:superdense}. Superdense coding allows Alice to convey two bits of information $x$ and $y$, which we treat as qubits in the $\Z$ state, to Bob by sending a single qubit and consuming one EPR pair.

\begin{figure}[ht]
\begin{center} \begin{tikzpicture}[scale=0.7,y=-1cm]

    \draw[double] (-0.4,0) node[left] {$x$} -- (10.5,0);
    \draw (-0.4,1) node[left] {$y$} edge[double] (10.5,1);
    \draw (-0.4,2) node[left] {$\ket{0}$} -- (7,2); \draw (7,2) -- (10.5,2) node[right] {$x$};
    \draw (-0.4,3) node[left] {$\ket{0}$} -- (10.5,3) node[right] {$y$};

	\unitary{H}{0.5}{2}
	\ctrl{2}{2}
    \draw (2,2) -- (2,3);
	\unitary{X}{2}{3}

    \draw[style=double] (4,0) -- (4,2);
	\ctrl{4}{0}
	\unitary{Z}{4}{2}
    \draw[style=double] (5.5,1) -- (5.5,2);
	\ctrl{5.5}{1}
	\unitary{X}{5.5}{2}

    \draw (8,2) -- (8,3);
	\ctrl{8}{2}
	\unitary{X}{8}{3}
	\unitary{H}{9.5}{2}

\end{tikzpicture} \end{center}
\caption{Superdense Coding sending classical bits $x$ and $y$ from Alice to Bob.}
\label{fig:superdense}
\end{figure}
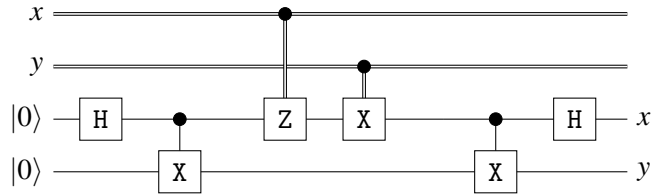

The desired type for this circuit is $\Z \times \Z \times \Z \times \Z$, showing that four classical bits are transmitted. Hence, we need to derive the output types for
$\Z_1, \Z_2, \Z_3$ and $\Z_4$. We can trivially derive that \coqe{superdense} has types $\Z_1 \to \Z_1$ and $\Z_2 \to \Z_2$ (since $CZ$ and $CNOT$ have types $\Z \otimes \I \to \Z \otimes \I$), so we'll look at the derivation for $\Z_3$:

\begin{coq}
Definition superdense :=
  INIT ;     I ⊗ I ⊗ Z ⊗ I   (* initial type *)
  H 3  ;     I ⊗ I ⊗ X ⊗ I   (* create Bell pair *)
  CNOT 3 4 ; I ⊗ I ⊗ X ⊗ X
  CZ 1 3 ;   Z ⊗ I ⊗ X ⊗ X   (* map bits onto Bell pair *)
  CNOT 2 3 ; Z ⊗ I ⊗ X ⊗ X
  CNOT 3 4 ; Z ⊗ I ⊗ X ⊗ I   (* decode qubits *)
  H 3        Z ⊗ I ⊗ Z ⊗ I
\end{coq}

We can similarly derive \coqe{superdense : I ⊗ I ⊗ I ⊗ Z -> I ⊗ Z ⊗ I ⊗ Z}. We can now combine all four derivations using our distributivity rule for $\to$ and $\cap$ as follows:
\[
\infer
{\mathtt{superdense} : (\Z_1 \to \Z_1) \cap (\Z_2 \to \Z_2) \cap (\Z_3 \to \Z \otimes \I \otimes \Z \otimes \I) \cap (\Z_4 \to \I \otimes \Z \otimes \I \otimes \Z)}
{\infer
{\mathtt{superdense} : \Z_1 \cap \Z_2 \cap \Z_3 \cap \Z_4 \to \Z_1 \cap \Z_2 \cap \Z \otimes \I \otimes \Z \otimes \I \cap \I \otimes \Z \otimes \I \otimes \Z}
{\infer
{\mathtt{superdense} : \Z \times \Z \times \Z \times \Z \to \Z \times (\Z \otimes \I \otimes \I \cap \I \otimes \Z \otimes \I \cap \Z \otimes \I \otimes \Z)}
{\infer
{\mathtt{superdense} : \Z \times \Z \times \Z \times \Z \to \Z \times (\Z \times (\Z \otimes \I \cap \I \otimes \Z)}
{\mathtt{superdense} : \Z \times \Z \times \Z \times \Z \to \Z \times \Z \times \Z \times \Z}
}
}
}
\]

\section{Example: GHZ state, Creation and Unentangling}\label{section:GHZ}

To demonstrate how we can track the possibly entangling and disentangling property of the CNOT gate, we can look at the example of creating the GHZ state $\ket{000} + \ket{111}$ starting from $\ket{000}$ and then disentangling it. A similar example was considered by Honda~\cite{Honda2015} to demonstrate how his system can track when $CNOT$ displays either its entangling or un-entangling behaviour. One crucial difference is that Honda uses the denotational semantics of density matrices which, in practice, would scale badly with the size of the program being type checked.
Our approach is closer to that of Perdrix~\cite{Perdrix2007,Perdrix2008} in terms of design and scalability but capable of showing separability where the prior systems could not.

We will consider the following GHZ program acting on the initial state $\Z \times \Z \times \Z = \Z_1 \cap \Z_2 \cap \Z_3$. We first follow the derivation for $\Z_1$:

\begin{coq}
  Definition GHZ :=
    INIT ;      Z ⊗ I ⊗ I   (* initial state *)
    H 1;        X ⊗ I ⊗ I
    CNOT 1 2;   X ⊗ X ⊗ I   (* Bell Pair *)
    CNOT 2 3:   X ⊗ X ⊗ X   (* GHZ State created *)
\end{coq}

Repeating the derivation for $Z_2$ and $Z_3$, We obtain the following type:
\[
\mathtt{GHZ} :
(\Z_1 → \X ⊗ \X ⊗ \X) ∩
(\Z_2 → \Z ⊗ \Z ⊗ \I) ∩
(\Z_3 → \I ⊗ \Z ⊗ \Z)
\]
This type is non-trivial to read, but we can clearly see that entanglement is produced between the three qubits.

If we now apply \coqe{CNOT 2 1}, we get the following type:
\[
\mathtt{GHZ;~CNOT~2~1} :
(\Z_1 → \I ⊗ \X ⊗ \X) ∩
(\Z_2 → \Z ⊗ \I ⊗ \I) ∩
(\Z_3 → \I ⊗ \Z ⊗ \Z)
\]
We can reduce the output type to $\Z \times (\X \otimes \X \cap \Z \otimes \Z)$, showing that we've neatly separated the first qubit from the Bell pair on qubits 2 and 3 in what used to be a GHZ state.

If we then apply \coqe{CNOT 3 2} we get
\[
\mathtt{GHZ;~CNOT~2~1;~CNOT~3~2} :
(\Z_1 → \I ⊗ \I ⊗ \X) ∩
(\Z_2 → \Z ⊗ \I ⊗ \I) ∩
(\Z_3 → \I ⊗ \Z ⊗ \I)
\]
resulting in the fully separable state $\Z \times \Z \times \X$.

\section{Beyond the Clifford Group}
\label{sec:beyond-clifford}
Universal quantum computation requires that we use gates outside the Clifford set, the two most studied candidates being the $T$ ($\pi/8$) and Toffoli gates. In order to type $T$ and Toffoli, we will need to add a new $\top$ type, which describes any basis state in the state interpretation of our types, and any unitary in the Heisenberg interpretation. We can then give the following type to the $T$ gate:
\begin{align*}
T: ~& \Z \rightarrow \Z \\
T: ~& \X \rightarrow \top
\end{align*}

Here $\top$ really is a top type. Unlike $\I$, which behaves like an identity, $\top$ behaves like an annihilator:
\begin{align*}
    \forall A, \I A = A = A\I \\
    \forall A, \top A = \top = A\top
\end{align*}

Instead of explicitly giving a type to the Toffoli gate, we can derive it from Toffoli's standard decomposition into $T$, $H$ and $CNOT$ gates:
\begin{coq}
Definition TOFFOLI a b c :=
  H c;
  CNOT b c; T† c;
  CNOT a c; T c;
  CNOT b c; T† c;
  CNOT a c; T b; T c;
  H c;
  CNOT a b; T a; T† b;
  CNOT a b.
\end{coq}

We first note that $T^\dagger$ is simply seven consecutive $T$s, so like $T$, it preserves $\Z$ and takes $\X$ to $\top$.
Now consider \coqe{TOFFOLI}'s action on $\I \otimes \I \otimes \X$. We can annotate the program with the types after every line:
\begin{coq}
Definition TOFFOLI a b c :=
  INIT ;                I ⊗ I ⊗ X
  H c ;                 I ⊗ I ⊗ Z
  CNOT b c; T† c;       I ⊗ Z ⊗ Z
  CNOT a c; T c;        Z ⊗ Z ⊗ Z
  CNOT b c; T† c;       Z ⊗ I ⊗ Z
  CNOT a c; T b; T c;   I ⊗ I ⊗ Z
  H c;                  I ⊗ I ⊗ X
  CNOT a b; T a; T† b;  I ⊗ I ⊗ X
  CNOT a b.             I ⊗ I ⊗ X
\end{coq}
The $T$'s here are all identities, so only the behaviors of $H$ (which takes $\X$ to $\Z$ and back) and $CNOT$ (which takes $\I \otimes \Z$ to $\Z \otimes \Z$ and back) are relevant.

The derivations for $\Z \otimes \I \otimes \I$ and $\I \otimes \Z \otimes \I$ are similar, and the other three states map to the top element:

\[ \begin{array}{rlrl}
\mathtt{TOFFOLI}: & \Z \otimes \I \otimes \I \rightarrow \Z \otimes \I \otimes \I & \quad
\mathtt{TOFFOLI}: & \X \otimes \I \otimes \I \rightarrow \top \otimes \top \otimes \top \\
\mathtt{TOFFOLI}: & \I \otimes \Z \otimes \I \rightarrow \I \otimes \Z \otimes \I & \quad
\mathtt{TOFFOLI}: & \I \otimes \X \otimes \I \rightarrow \top \otimes \top \otimes \top \\
\mathtt{TOFFOLI}: & \I \otimes \I \otimes \Z \rightarrow \top \otimes \top \otimes \top & \quad
\mathtt{TOFFOLI}: & \I \otimes \I \otimes \X \rightarrow  \I \otimes \I \otimes \X \\
\end{array} \]

Using our technique for deriving judgements about separability, we can further derive the following type for the Toffoli gate
\[ \mathtt{TOFFOLI} : \Z \times \Z \times \X \rightarrow \Z \times \Z \times \X \]
which says that if you feed a Toffoli three separable qubits in the $\Z$, $\Z$ and $\X$ basis, you receive three qubits in the same basis back.

\section{Applications and Future Work}
\label{sec:future}

We think that this type system, along with possible extensions, is broadly applicable. Here we outline some of the possible uses of the type system along with (where necessary) extensions that will enable these uses.

\paragraph{Stabilizer Types and Quantum Error Correction}

This aim of this paper is to define types for unitary operations, however in \cref{sec:basis-semantics} we interpreted types such as $\X\otimes\X$ and $\X\otimes\I$ as being inhabited by certain states. Gates and circuits, by acting on such states, obtain an arrow type. But as we vary the input states on which unitaries act, they in fact obtain many different arrow types. We used the concept of intersection types to characterize this phenomenon.

However the nature of Pauli operators allows for a different treatment using the stabilizer formalism~\cite{Gottesman1998}. Namely two tensor products of Pauli operators (of the same arity) must either commute or anticommute. Consequently, any collection of such types with a joint eigenspace (and hence having a nonempty intersection type) necessarily pairwise commutes. Then, in fact, any product of these operators will also share this eigenspace and would serve equally well in describing the intersection type.

For example, $X\otimes X$ and $X\otimes I$ commute and the intersection type $(\X\otimes\X) \cap (\X\otimes\I)$ is nonempty, being $\{\ket{++},\ket{+-},\ket{-+},\ket{--}\}$. These states are also eigenstates of $I\otimes X = (X \otimes X)(X\otimes I)$ and hence also of type $\I\otimes\X = (\X\otimes\X) * (\X\otimes\I)$. Consequently our intersection type could be equally well presented as $(\X\otimes\X) \cap (\I\otimes\X)$ or $(\I\otimes\X) \cap (\X\otimes\I)$. And so we should not think of the intersection type $(\X\otimes\X) \cap (\X\otimes\I)$ being determined by $\X\otimes\X$ and $\X\otimes\I$ alone, but rather by all the elements in the (commutative) group they generate.

Such a commutative group is typically called a stabilizer group. Formally, a stabilizer group is any commutative group of tensor products of Pauli operators some common arity $n$ that does not contain $-I^{\otimes n}$. This latter condition ensures the joint $+1$-eigenspace of all the operators in a stabilizer group exists, which then is called the stabilizer code associated to the group~\cite{gottesman1996class}. The eigenspaces associated with other eigenvalue combinations are called syndrome spaces. Therefore our intersection type (presuming it does not include $-\I^n$) is realized as a stabilizer code and its associated syndrome spaces, hence capturing the notion of a Pauli frame~\cite{knill2005quantum}.

The Gottesman semantics of \cref{sec:main} then can be interpreted as type theory for stabilizer groups and stabilizer codes, which is the direction taken by \cite{Honda2015}. This provides a type theory for stabilizer codes that includes encoding and decoding~\cite{cleve1997efficient}, and syndrome extraction~\cite{steane1997active} circuits. Potential other applications could include analyzing circuits that implement non-Clifford gates on stabilizer codes~\cite{paetznick2013universal, yoder2017universal} and circuits that fault-tolerantly switch between codes~\cite{colladay2018rewiring}.

Beyond application to quantum error correction, treating intersection types through stabilizer groups offers new deduction rules. In the analysis of the GHZ state of $\cref{section:GHZ}$, the type of \texttt{GHZ; CNOT 1 3} has as codomain the intersection type
$$(\X\otimes \X\otimes \I) \cap (\Z \otimes \Z \otimes \I) \cap (\Z \otimes \Z \otimes \Z).$$
Given the deduction rules of \cref{fig:types} alone, proving separability of this type is impossible. Yet through the stabilizer formalism, one recognizes it also has the type $(\I \otimes \I \otimes \Z) = (\Z \otimes \Z \otimes \I) * (\Z \otimes \Z \otimes \Z)$, and so we may represent our intersection type equivalently as
$$(\X\otimes \X\otimes \I) \cap (\Z \otimes \Z \otimes \I) \cap (\I \otimes \I \otimes \Z) = ((\X\otimes \X) \cap (\Z\otimes \Z))\times \Z.$$

\paragraph{Adding measurement}

It's challenging to turn Gottesman's semantics for measurement into a type system in light of the fact that it looks at the operation on all the basis states, rather than simply the evolution of a single Pauli operator. Namely it adds significant computational complexity, where typechecking should be linear. Nonetheless, the action of measurement on stabilizer groups is understood~\cite{Gottesman1998}. With the link between intersection types and stabilizers established above, we can see the potential for realizing such a type system. The challenge lies in the fact that to execute the action of measurement on a stabilizer group, one needs to find a set of generators of the group in a particular form. In our grammar, we need to express an intersection type using a special presentation, which requires the introduction of rewrite rules like those we just used in the analysis of the GHZ state.

Formally measurement acts as follows; for ease of exposition, suppose we are measuring the first qubit in the computational basis (a $Z$-basis measurement). From our initial intersection, do the following:
\begin{enumerate}
    \item Use the rewrite rules to ensure there are only $0$ or $1$ terms in the intersection that have $\X$ in the first position. If there is $1$ such term, then remove it from the intersection.
    \item If there are $0$ terms that have $\X$ in the first position, use the rewrite rules to ensure there are only $0$ or $1$ terms that have $\Z$ in the first position. If there is $1$ such term, remove it from the intersection.
    \item Add $\Z\otimes\I^{n-1}$ to the intersection.
\end{enumerate}
The resulting intersection type is the type after measurement.

For example in our analysis of the GHZ state in \cref{section:GHZ}, the circuit \texttt{GHZ} had a codomain of type
$$(\X\otimes \X\otimes \X) \cap (\Z \otimes \Z \otimes \I) \cap (\I \otimes \Z \otimes \Z).$$
To compute the type of \texttt{GHZ; MEAS 1} we enact the above program. Fortunately our intersection already has the requisite form, with the first term being the only one with an $\X$ in the initial position. We remove the term and add $\Z\otimes\I\otimes\I$ to get
$$(\Z\otimes \I\otimes \I) \cap (\Z \otimes \Z \otimes \I) \cap (\I \otimes \Z \otimes \Z).$$
Using these same rewrite rules, we can replace the second term with $\I \otimes \Z \otimes \I$ and with that term replace the last with $\I \otimes \I \otimes \Z$, producing the output type as
$$(\Z\otimes \I\otimes \I) \cap (\Z \otimes \I \otimes \I) \cap (\I \otimes \I \otimes \Z) = \Z \times \Z \times \Z.$$

At first glance, it is not clear that the program for measurement above is well-defined. Why should such a presentation for a general intersection type exist? Even if one does, how do we find one among all possible rewrites? Fortunately neither of the concerns is an issue. It is well known that stabilizer groups have a denotational semantics as binary matrices of dimension linear in the number of qubits, see for example \cite[\S{10.5}]{Nielsen2010}. The existence of such a presentation, and a procedure to find one, reduces to straightforward linear algebra. Hence the program above for measurement can be accomplished in at worst quadratic time in the number of qubits.

\vspace{-0.5em}
\paragraph{Resource Tracking}
Resource monotones track the amount of resources contained in a type. For instance, at a very coarse level, one might quantify the entanglement in a state by counting the number of ``ebits'' needed to create the state. In our formalism, a Bell state is of type $\E = (\X\otimes \X) \cap (\Z\otimes \Z)$, which is counted as having one ebit.  A seperable type such as $\X \times \Z$ has a resource value 0 and hence no ebits. While the current system cannot handle such resource monotones, it seems plausible that using a suitable extension could similarly calculate the resource cost of various operations. Say, by counting the number of $\E$ types used in a protocol. Then, superdense coding has an $\E$-count of 1 while entanglement swapping will have an $\E$-count of 2.

Our types are too fine to capture the notion of local equivalence~\cite{van2005local}. Instead, one can coarsen to types generated from graph states~\cite{fattal2004entanglement}. For example, all six entangled two qubit types are equivalent under the local operations of $H\otimes I$, $S\otimes I$, $I\otimes H$, and $I\otimes S$, and therefore, from an entanglement monotone perspective, they all contain the same amount of entanglement.

One can show that for three qubits, there are only two classes of entangled types, up to relabeling of the qubit indices~\cite{bravyi2006ghz}. Such computations are quite challenging at scale, and so methods to quantify the amount of entanglement using such states is a long-term goal of this project.

\vspace{-0.5em}
\paragraph{Ancilla correctness.}
Many quantum circuits introduce ancillary qubits that are used to perform some classical computation and are then discarded in a basis state.
Several efforts have been made to verify this behaviour: The Quipper~\cite{Green2013} and Q\#~\cite{Svore2018} languages allow us to \emph{assert} that ancilla are separable and can be safely discarded, while \qwire allows us to manually verify this~\cite{Rand2018}. More recently, Silq~\cite{Silq} allows us to define ``qfree'' functions that never put qubits into a superposition. We hope to avoid this restriction and use our type system to automatically guarantee ancilla correctness by showing that the ancillae are in $\Z$ and separable.

\vspace{-0.5em}
\paragraph{Provenance tracking}
Another useful addition to the type system is \emph{ownership}. Superdense coding is a central example of a class of quantum communication protocols. By annotating the typing judgements with ownership information and restricting multiqubit operations to qubits under the same ownership, we can guarantee that superdense coding only transmits a single qubit, via a provided channel $\mathcal{C}$.
(Note that measurement and ownership types are both forms of static information-flow control~\cite{Sabelfeld2003}.)
With this additional typing information, we can give superdense coding the type
\[ \Z_A \times\Z_A\times\Z_A\times\Z_B \to \Z_A\times\Z_A\times\Z_B\times\Z_B \]
indicating that a single qubit has passed from Alice's control to Bob's.

\subsection*{Acknowledgements}
The first author would like to acknowledge the support of the U.S. Department of Energy, Office of Science, Office of
Advanced Scientific Computing Research, Quantum Testbed Pathfinder Program under Award Number DE-SC0019040.
The third author was funded by EPiQC, an NSF Expedition in Computing, under grant CCF-1730449.

\bibliographystyle{eptcs}
\bibliography{references}

\end{document}